\documentclass[lettersize,journal]{IEEEtran}
\usepackage{amsmath,amsfonts}
\usepackage{amsthm}
\usepackage{algorithmic}
\usepackage{algorithm}
\usepackage{array}
\usepackage[caption=false,font=normalsize,labelfont=sf,textfont=sf]{subfig}
\usepackage{textcomp}
\usepackage{stfloats}
\usepackage{url}
\usepackage{verbatim}
\usepackage{graphicx}
\usepackage{hyperref}
\usepackage{xcolor} 
\hypersetup{colorlinks, linkcolor={black}, citecolor={black}, urlcolor={black}}
\usepackage{cite}
\hyphenation{op-tical net-works semi-conduc-tor IEEE-Xplore}
% updated with editorial comments 8/9/2021

\newtheorem{theorem}{Theorem}
\newtheorem{lemma}[theorem]{Lemma}

\newtheorem{problem}{Problem}

\newcommand{\abs}[1]{\lvert#1\rvert}
\newcommand\numberthis{\addtocounter{equation}{1}\tag{\theequation}}

\makeatletter
\renewcommand{\make@df@tag@@@}[2][]{%
	\gdef\df@tag{%
		\tagform@{#2\rlap{\hphantom)#1}}%
		\toks@\@xp{\p@equation{#2}}%
		\edef\@currentlabel{\the\toks@}%
	}%
}
\makeatother

\begin{document}

\title{D\&A: Resource Optimisation in Personalised PageRank Computations Using Multi-Core Machines}

\author{Kai Siong Yow$^{\ast}$ and Chunbo Li%and Siqiang Luo
        % <-this % stops a space
\thanks{Kai Siong Yow is with the School of Computer Science and Engineering, Nanyang Technological University, Singapore and the Department of Mathematics and Statistics, Faculty of Science, Universiti Putra Malaysia, 43400 UPM Serdang, Selangor, Malaysia. E-mail: kaisiong.yow@ntu.edu.sg, ksyow@upm.edu.my}% <-this % stops a space
\thanks{Chunbo Li is with the School of Computer Science and Engineering, Nanyang Technological University, Singapore. E-mail: chunbo001@e.ntu.edu.sg}% <-this % stops a space
%\thanks{Siqiang Luo is with the School of Computer Science and Engineering, Nanyang Technological University, Singapore. E-mail: siqiang.luo@ntu.edu.sg}% <-this % stops a space
\thanks{$^{\ast}$Correspondence:~\texttt{ksyow@upm.edu.my}}}
%\thanks{Manuscript received April 19, 2021; revised August 16, 2021.}

% The paper headers
%\markboth{Journal of \LaTeX\ Class Files,~Vol.~14, No.~8, August~2021}%
%{Shell \MakeLowercase{\textit{et al.}}: A Sample Article Using IEEEtran.cls for IEEE Journals}

%\IEEEpubid{0000--0000/00\$00.00~\copyright~2021 IEEE}
% Remember, if you use this you must call \IEEEpubidadjcol in the second
% column for its text to clear the IEEEpubid mark.

\maketitle

\begin{abstract}
	%Resource optimisation is commonly used in managing workloads so that they can be accomplished efficiently in a timely manner using available resources. It is useful in minimising various costs and many algorithms have been designed for this purpose. Most of the existing techniques aim to find an effective way in scheduling and completing workloads according to resources provided.
	Resource optimisation is commonly used in workload management, ensuring efficient and timely task completion utilising available resources. It serves to minimise costs, prompting the development of numerous algorithms tailored to this end. The majority of these techniques focus on scheduling and executing workloads effectively within the provided resource constraints. In this paper, we tackle this problem using another approach. We propose a novel framework D\&A to determine the number of cores required in completing a workload under time constraint. We first preprocess a small portion of queries to derive the number of required slots, allowing for the allocation of the remaining workloads into each slot. We introduce a scaling factor in handling the time fluctuation issue caused by random functions. We further establish a lower bound of the number of cores required under this scenario, serving as a baseline for comparison purposes. We examine the framework by computing personalised PageRank values involving intensive computations. Our experimental results show that D\&A surpasses the baseline, achieving reductions in the required number of cores ranging from $ 38.89\% $ to $ 73.68\% $ across benchmark datasets comprising millions of vertices and edges.
\end{abstract}

\begin{IEEEkeywords}
resource optimisation, personalised PageRank, parallel computing, cloud computing, multi-core machine
\end{IEEEkeywords}

\section{Introduction}
\IEEEPARstart{R}{esource} optimisation~\cite{Chen_JCC2021,Zhou_MPE2022} stands as a pivotal technique in efficiently managing and allocating workloads based on available resources. Its application spans across diverse domains, ranging from project management to cloud computing. Specifically, in computer science, it seeks to enhance the execution of programs or minimise resource usage such as memory, power, or other resources, during computations.

A variety of resource optimisation algorithms have been designed for different purposes, including the round-robin scheduling (RR)\cite{Rasmussen_EJOR2008}, ant colony optimisation algorithm (ACO)~\cite{Dorigo_IEEE1997,Zhou_MPE2022}, particle swarm optimisation algorithm (PSO)~\cite{Kennedy_ICNN1995}, genetic algorithm~\cite{Holland_SA1992} and bacterial foraging optimisation algorithm\cite{Nasir_ESA2015}. The majority of current algorithms prioritise discovering efficient methods for scheduling workloads according to available resources, aiming to achieve task completion within a reduced time frame. For further details on optimisation techniques, readers are referred to~\cite{Pham_2012}.

We first provide a concise overview of several existing resource optimisation techniques commonly employed in cloud environments. The heuristic algorithm RR first assigns a fixed time slot and processes each task in a circular order. If a task is not completed after the assigned time slot, it will be interrupted and the subsequent task in the queue will be processed. The process terminates when all tasks are completed. This technique is however not suitable for large-scale platforms.
%applied in the resource allocation process of the small-scale cloud platforms. These algorithms are simple and easy, but they are also prone to waste resources, especially in large-scale cloud platforms
For ACO, it is used to solve computational problems based on probabilistic techniques where an agent (artificial ant) moves through all possible solutions in order to locate an optimal solution by modelling problems using graphs. It is designed based on the concept that pheromones produced by real ants guide other ants to traverse between food and their colony, and ultimately generate the optimal path. In the cloud computing environment, ants act as agents that travel between machines and jobs, and assign requested jobs to cloud resources. Zuo et al.~\cite{Zuo_IEEE2016} developed a multi-objective scheduling method based on ACO to minimise the completion time and cost.
The metaheuristic algorithm PSO optimises a problem via a population (swarm) of candidate solutions known as particles, where particles move around the search space according to mathematical formula that is developed based on the position and velocity of particles. It requires just the objective function and makes very few or no assumptions. Fatima et al.~\cite{Fatima_ELEC2018} proposed an enhanced PSO that combines PSO and a levy flight algorithm to minimise the number of physical servers under a cloud environment.

Acknowledging the significance of resource optimisation techniques, especially within cloud environments, we approach this problem from a fresh angle in this work. Rather than concentrating on scheduling workloads effectively through heuristic methods, our proposal centers on a framework aimed at minimising the number of cores necessary to fulfill a given task within predefined time constraints.
%Given the importance of resource optimisation techniques particularly under cloud environments, we tackle this problem from a new perspective in this work. Instead of focusing to schedule workloads effectively using heuristic approaches, we propose a framework that minimises the number of required cores in completing a given task under time constraints.
We focus on the computations of \emph{personalised PageRank} (PPR)~\cite{Page_1999,Wang_KDD2020,Ivan_bio2011}, an extension of PageRank (PR)~\cite{Brin_1998,Page_1999,Berkhin_2005} algorithm, which is introduced to cope with the prediction of personal preferences. \IEEEpubidadjcol
%Note that PR~\cite{Brin_1998,Page_1999,Berkhin_2005} algorithms were introduced to rank and measure the importance of webpages, by producing a probability distribution that represents the likelihood in which a person is directed to any particular page, by clicking on a random link. It is the first algorithm used by Google in organising their search results. It can be applied to any graph in any domain. It is used regularly in social and information network analysis, system analysis as well as link prediction.
Given a graph $ G $ and a pair of vertices $ s $ and $ t $, the PPR value $ \pi(s,t) $ is defined as the probability of a random walk that begins from the \emph{source vertex} $ s $ and ends at the \emph{target vertex} $ t $. This value indicates the importance of $ t $ with regard to $ s $. The concept has been applied in various fields~\cite{Park_IEEE2019} including recommendation systems, social network analysis and information retrieval.
%Girvan_PNAS2002,caoyangwang2011

Computing PPR values is crucial, yet costly, particularly for graphs with high orders $ n $. This process entails extracting eigenvalues from an $ n \times n $ matrix. In addition, $ O(n^{{2}}) $ space is required in order to store all PPR values given that distinct pairs of source and target vertices derive different PPR values.
Due to the nature of the problem, PPR computations using various heuristic and approximation methods~\cite{Fogaras_IM2005,Lofgren_KDD2014,Lofgren_WSDM2016,Wang_PVLDB2016,Wang_KDD2017} remain challenging, even to solve its relax version by fixing the source and target vertices. Hence, we handle this problem by computing PPR values within a parallel computing environment~\cite{Hou_PVLDB2021}, with the objective of minimising core usage and ensuring the processing of all queries within the specified time constraint, rather than reducing computation time.

%To shorten the computation time of PPR values, a multi-core machine could be used in executing each proposed framework. Due to the nature of PPR computations, a considerable amount of cores might be needed in order to process all queries within a time frame. We hence aim to address the following optimisation problem.

We aim to address the following optimisation problem.
\begin{problem}\label{prob:minimising_cores}
	Given a multi-core machine with $ \mathcal{C} $ cores, design a computational model so that a huge number $ \mathcal{X} $ of personalised PageRank queries can be processed within a given time $ \mathcal{T} $, by minimising $ \mathcal{C} $.
\end{problem}
%\noindent
%This not only reduces the computation costs (by allocating a suitable number of cores), but also allows a better time management where a task can be completed on time.

We design an approach namely \textsc{Divide and Allocate in parallel computing} (abbreviated as \textsc{D\&A}) in addressing Problem~\ref{prob:minimising_cores}. This approach will later be extended by fixing the number of available resources in dealing with real-world scenarios. Resource optimisation using multi-core machines can potentially be applied in various domains including server and data centre management, database query optimisation, and scientific and engineering applications. To the best of our knowledge, we are pioneering an investigation into the problem from this perspective, where the primary objective is to minimise and determine an appropriate utilisation of cores. To facilitate comparison, we additionally establish a theoretical bound to ascertain the lower bound of the number of cores for the problem under identical constraints. Our framework exhibits flexibility in the sense that it can be employed by any algorithm feasible within parallel environments.

We validate our framework by computing PPR values using FORA~\cite{Wang_KDD2017} (an effective index-based solution that merges forward push and Monte Carlo random walk techniques) on a multi-core machine, based on different benchmark datasets with up to $ 4.8 \times 10^{6} $ vertices and $ 6.8 \times 10^{7} $ edges. The algorithm is chosen due to its proven superiority in managing large-scale networks and its widespread recognition for efficiency. Furthermore, the utilisation of random functions within the framework adds complexity to the problem. Our experimental results demonstrate the effectiveness of the framework, where the number of cores is reduced by up to $ 73.68\% $ compared to the outcomes obtained through the theoretical approach.

The subsequent sections of this article are outlined as follows: We present a sample size formula in Section~\ref{sec:sample_size} to determine the number of samples required during the preprocessing phase. Subsequently, in Sections~\ref{sec:D&A} and \ref{sec:theoretical_cound}, we introduce our proposed model and establish a theoretical bound to serve as a baseline for comparative purposes. Section~\ref{sec:experimental_setting} delineates the experimental settings, followed by an exposition of our results and analyses in Section~\ref{sec:results}. Finally, we draw conclusions and suggest potential avenues for future research in Section~\ref{sec:conclusion}.

\section{Sample Size Estimation}\label{sec:sample_size}

%We first briefly discuss the way we determine the sample size during the preprocessing stage.
The optimisation of the number of cores is heavily dependent on the average running time $ \bar{t} $ in processing each query. When dealing with a large volume of queries, a subset of sample queries can be used during the preprocessing step in determining $ \bar{t} $. Several methods can be employed for this task, such as published tables~\cite{Krejcie_1970}, historical data in similar studies or employing formulae~\cite{Cochran_1977}.

Rather than relying on a random selection of samples, our approach involves initially utilising a sample size formula to ascertain the requisite number of samples for estimating $ \bar{t} $ during the preprocessing stage. This ensures statistical reliability and precision in our estimations. Additionally, we aim to achieve a delicate equilibrium between acquiring a suitably large sample size to diminish sampling error and minimising unnecessary data collection to optimise resource utilisation.

For a large number of queries,  we use the following formula~\cite{Cochran_1977} to determine the sample size $ s $ required during the preprocessing stage:
\begin{equation}\label{eqn:sample_size}
	s = \frac{Z^{2} \cdot p \cdot (1-p)}{e^{2}} %r = \left( \frac{Z \cdot \sigma}{e} \right)^{2}
\end{equation}
where $ Z, p, $ and $ e $ denote the standard score (also known as $ z $-score), population proportion that is normally distributed and acceptable sampling error, respectively. The formula gives the lower bound of the required sample size with some margin of error in the estimated proportion.
%so that the sampling error in a given confidence interval does not exceed $ e $.
In Equation~\ref{eqn:sample_size}, the value of $ z $-score is associated with the chosen confidence interval where the most commonly chosen confidence intervals are 90\%, 95\% and 99\%. The value of $ p $ is in between 0 to 1, and if no any information is available to approximate $ p $, the most conservative way is to set $ p=0.50 $ to generate the largest sample size. The value of $ e $ is usually expressed as a percentage, which implies the desired degree of precision. If the sample size is too small (respectively, big), the confidence interval can be increased (respectively, reduced) or the acceptable sampling error can be reduced (respectively, increased).
%The standard deviation $ \sigma $ measures the dispersion of a set of values from the mean. If there is no any data available in determining $ \sigma $, the most conservative way is to set $ \beta = 0.50 $ so that the sample size is large enough.

\paragraph{Example}

Given a 99\% confidence interval, 0.50 population proportion and an acceptable sampling error of 5\%, we have
\begin{equation}\label{eqn:sample_size_example}
	s = \frac{2.576^{2} \cdot 0.50 \cdot (1-0.50)}{0.05^{2}} = 663.58 \approx 664.
\end{equation}
%Note that all these values are adjustable depending on the total number of queries, in obtaining a suitable sample size. A higher accuracy can also be obtained by reducing $ e $, in which the value of $ s $ will be increased significantly.

\section{The Proposed Model}

\subsection{D\&A Framework}\label{sec:D&A}

We first design a general approach namely \textsc{Divide and Allocate in parallel computing} (abbreviated as \textsc{D\&A}) as shown in Algorithm~\ref{alg:minimum_cores}, to determine the number of cores required in order to complete $ \mathcal{X} $ independent queries within a given duration $ \mathcal{T} $. Here, we assume that $ \mathcal{X} $ is large and there is no any constraint on available resources. See Fig.~\ref{fig:D&A} for an illustration of the framework.
\begin{algorithm}[t]
	\caption{\textsc{D\&A} ($ \mathcal{X}, \mathcal{T} $)}
	\begin{algorithmic}[1]
		%\STATE \textbf{Input}: $ \mathcal{X} $ independent queries and time $ \mathcal{T} $
		%\STATE \textbf{Output}: The minimum number of cores required to process all $ \mathcal{X} $ queries within time $ \mathcal{T} $
				
		\STATE Using Equation~\ref{eqn:sample_size}, determine the sample size $ s $ by setting $ x\% $ confidence interval and $ e $ acceptable sampling error.
		\STATE Preprocess $ s $ queries in parallel using $ s $ cores.
		\STATE Let $ t_{i} $ be the time to process the $ i^{th} $ query, and set $ t_{max} = \max\{t_{i} \vert i=1,\ldots,s\} $.
		\STATE Let $ \ell = \lfloor \frac{\mathcal{T}-t_{max}}{t_{max}} \rfloor $ be the slots allocated to the remaining duration.
		\STATE Assign $ k = \lceil \frac{\mathcal{X}-s}{\ell} \rceil $ queries to each of the $ \ell $ slots. For each slot, process all $ k $ queries in parallel using at most $ k $ cores.
		\STATE For $ j \in \{1,\ldots,k\} $, let $ T_{j} $ be the total processing time for the $ j^{th} $ core to complete all the allocated queries in each slot. %where the number of allocated queries is upper bounded by $ \ell $. 
		\STATE Set $ T_{max} = \max\{T_{j} \vert j=1,\ldots,k\} $.
		\STATE \textbf{if} {$ t_{max} + T_{max} \le \mathcal{T} $}
		%\STATE \hspace{0.5cm} \textbf{if} $ k < s $
		\STATE \hspace{0.5cm} \textbf{return} $ k $.
		%\STATE \hspace{0.5cm} \textbf{else}
		%\STATE \hspace{1.0cm} \textbf{return} $ k $.
		\STATE \textbf{else}
		\STATE \hspace{0.5cm} Go back to Step 2.
	\end{algorithmic}
	\label{alg:minimum_cores}
\end{algorithm}

\begin{figure}[t]
	\centering
	\includegraphics[scale=0.21]{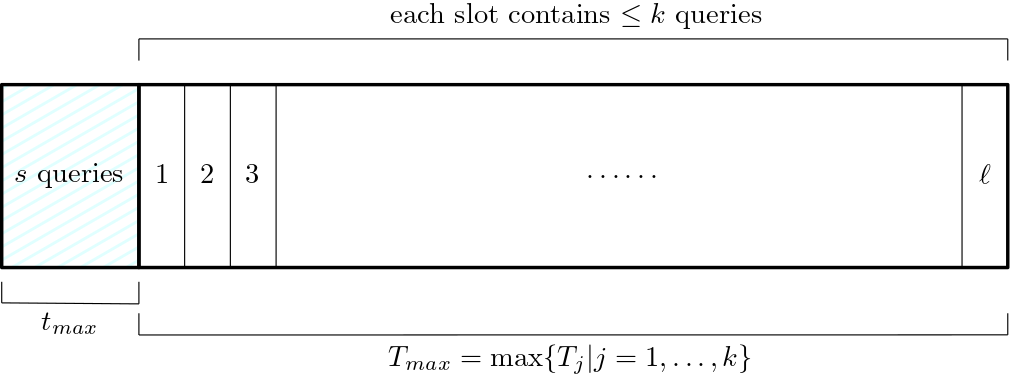}
	\caption{An illustration of D\&A}
	\label{fig:D&A}
\end{figure}

To determine a suitable sample size $ s $ based on Equation~\ref{eqn:sample_size}, we set the confidence interval and acceptable sampling error at $ x\% $ and $ e $, respectively (Line 1).
%(To verify the effectiveness of Equation~\ref{eqn:sample_size}, various $ x $ and $ e $ could also be used for comparison purposes.).
We also set $ p=0.50 $ to generate the largest sample size. Let $ t_{i} $ represent the processing time to complete the $ i^{th} $ query and $ t_{max} = \max\{t_{i} \vert i=1,\ldots,s\} $. To minimise the overall processing time, all $ s $ queries will be preprocessed in parallel using $ s $ cores, and $ t_{max} $ will then be determined (Lines 2--3). We determine the number of slots $ \ell = \lfloor \frac{\mathcal{T}-t_{max}}{t_{max}} \rfloor $ based on the remaining duration $ \mathcal{T}-t_{max} $. The floor function ensures that the total processing time is at most $ \mathcal{T} $ when all the remaining queries are processed in parallel in their respective slots, assuming that the processing time in each slot is at most $ t_{max} $ (Line 4).

For each slot, we evenly assign to it $ k = \lceil \frac{\mathcal{X}-s}{\ell} \rceil $ queries, ensuring that all remaining queries are allocated, as denoted by the ceiling function. We also assume that $ s \leq k $; otherwise, no further action is necessary, as $ s $ cores are utilised during the preprocessing stage, hence requiring $ s $ cores. As a consequence of the ceiling function, some slots may contain less than $ k $ queries. Subsequently, all $ k $ queries in each slot are processed in parallel, using at most $ k $ cores (Line 5). For $ j \in \{1,\ldots,k\} $, let $ T_{j} $ represent the total processing time for each $ j^{th} $ core to complete all the allocated queries in each slot (Line 6). If $ t_{max} + \max\{T_{j}\} \le \mathcal{T} $ and considering $ s \leq k $, we require a minimum of $ k $ cores to process all $ \mathcal{X} $ queries within time $ \mathcal{T} $ (Lines 7--9).
%If $ t_{max} + \max\{T_{j}\} \le \mathcal{T} $ and if $ s $ is greater than $ k $, then we need at least $ s $ cores to process all $ \mathcal{X} $ queries within time $ \mathcal{T} $ since $ s $ cores are utilised during the preprocessing step. Otherwise, only $ k $ cores are required (Lines 7--12).
Alternatively, if the total running time surpasses $ \mathcal{T} $, indicating that the task cannot be accomplished within the allotted time frame, we revert to Step 2 and repeat the entire process (Lines 10--11). This situation may arise due to significant fluctuations in processing times when queries are processed within their respective slots.

We now derive a lower bound of $ \mathcal{C} $ based on Algorithm~\ref{alg:minimum_cores}.

\begin{lemma}\label{lem:bound_c}
	For a multi-core machine without any constraint on the number of cores, suppose $ \mathcal{X} $ is the number of queries, $ s $ is the number of sample queries and $ t_{i} $ is the running time to complete the $ i^{th} $ sample query. If $ t_{max} = \max\{t_{i} \vert i=1,\ldots,s\} $, then the minimum number of cores required to process all $ \mathcal{X} $ queries within time $ \mathcal{T} $ is at least $ \frac{\mathcal{X} \cdot t_{max}}{\mathcal{T}} $.
\end{lemma}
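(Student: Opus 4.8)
The plan is to establish the lower bound via a simple counting/work-conservation argument. First, I would observe that among the $\mathcal{X}$ queries, the $s$ sample queries have already been processed during the preprocessing stage; but since the claimed bound involves $\mathcal{X}$ rather than $\mathcal{X}-s$, I would instead work with the cleaner accounting in which every one of the $\mathcal{X}$ queries must be processed at some point, and a lower bound on the total work suffices. The key quantity is the \emph{total processing time} (summed over all queries): if each query were to take time exactly $t_{max}$, the aggregate work would be $\mathcal{X}\cdot t_{max}$. The idea is that $t_{max}$, being the maximum observed sample running time, serves as the representative per-query cost used throughout the D\&A allocation, so the worst-case workload that the algorithm must be provisioned for is $\mathcal{X}\cdot t_{max}$.

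Next I would use the fundamental constraint of parallel scheduling: with $\mathcal{C}$ cores each running for at most time $\mathcal{T}$ (the total wall-clock budget), the maximum amount of work that can be completed is $\mathcal{C}\cdot\mathcal{T}$. Since all $\mathcal{X}$ queries must be completed, work-conservation forces
\begin{equation}
	\mathcal{C}\cdot\mathcal{T} \ge \mathcal{X}\cdot t_{max},
\end{equation}
and rearranging gives $\mathcal{C} \ge \frac{\mathcal{X}\cdot t_{max}}{\mathcal{T}}$, which is exactly the claimed bound. To make this rigorous I would note that a query cannot be split across cores (each query runs contiguously on one core, as in Line 5 of Algorithm~\ref{alg:minimum_cores}), and that no core can be assigned a query whose processing would push its cumulative time past $\mathcal{T}$; summing the per-core time budgets yields the inequality above.

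The main obstacle — really a modelling subtlety rather than a deep difficulty — is justifying the use of $t_{max}$ as the per-query time in the lower bound. Strictly, the sample queries' times $t_i$ are known and need not equal $t_{max}$, and the non-sample queries' times are a priori unknown; so the bound $\mathcal{X}\cdot t_{max}$ on total work is really a statement about the worst case consistent with the sample, or equivalently the provisioning target that D\&A uses. I would therefore frame the lemma's conclusion as the minimum number of cores that the framework must allocate (equivalently, under the conservative assumption that each query costs at most $t_{max}$, which is what Line 4 of the algorithm already assumes when it sets $\ell = \lfloor (\mathcal{T}-t_{max})/t_{max}\rfloor$). Under that reading the argument is just the two-line work-conservation bound above, and I would state explicitly that this matches the assumption built into D\&A so that the lemma is a genuine baseline for the experimental comparison.

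A secondary point worth a sentence is the direction of the rounding: the true minimum integer number of cores is $\lceil \mathcal{X}\cdot t_{max}/\mathcal{T}\rceil$, but since the lemma only claims "at least $\frac{\mathcal{X}\cdot t_{max}}{\mathcal{T}}$" without a ceiling, the plain inequality $\mathcal{C}\ge \mathcal{X}\cdot t_{max}/\mathcal{T}$ is all that needs to be shown, and the integrality can be invoked afterwards if a sharper integer bound is desired.
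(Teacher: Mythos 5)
Your proof is correct under the reading you adopt (each query is budgeted $t_{max}$, which is exactly the assumption D\&A builds into its slot allocation), but it takes a genuinely different route from the paper's. The paper derives the bound from the algorithm's own bookkeeping: it starts from $k \ge (\mathcal{X}-s)\cdot t_{max}/(\mathcal{T}-t_{max})$, which follows from $k = \lceil (\mathcal{X}-s)/\ell\rceil$ with $\ell = \lfloor(\mathcal{T}-t_{max})/t_{max}\rfloor$, then substitutes $s=k$ (balancing the preprocessing batch with the per-slot batches) and solves for $k$ to obtain $k \ge \mathcal{X}\, t_{max}/\mathcal{T}$. Your argument replaces this with the generic work-conservation bound $\mathcal{C}\cdot\mathcal{T} \ge \mathcal{X}\cdot t_{max}$; this is more elementary, applies to any schedule rather than only to D\&A's slot structure, and makes transparent why the preprocessing term disappears from the final bound (the paper achieves the same effect via the $s=k$ substitution, which is really the step that folds the preprocessing round into a total of $\mathcal{T}/t_{max}$ rounds of $k$ parallel queries each). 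The two derivations are morally the same inequality --- $k$ cores each complete at most $\mathcal{T}/t_{max}$ queries --- but yours states it directly, which is what a clean lower-bound argument should do. One small caution: for the work-conservation inequality to yield a \emph{lower} bound you must budget each query at exactly (or at least) $t_{max}$, not ``at most $t_{max}$'' as you write at one point; with ``at most,'' $\mathcal{X}\, t_{max}$ over-estimates the true work and the inequality only constrains the provisioning target, not the genuine minimum. You correctly flag this as a modelling subtlety rather than gloss over it, and the paper's lemma carries the same looseness, since its $k$ is likewise the core count D\&A allocates under the $t_{max}$-per-query budget rather than a bound on what an omniscient scheduler could achieve.
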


\begin{proof}
	Let $ \mathcal{X}, \mathcal{T} $ and $ t_{max} $ be as stated. Given a multi-core machine that has no constraint on the number of cores, all queries can be processed in parallel where each query is processed by one core at a time. Suppose $ s $ queries are used in the preprocessing step in obtaining $ t_{max} $. By using $ t_{max} $, we now determine the number $ \ell $ of slots and hence the number $ k $ of queries that can be allocated into each slot.
	
	Based on Algorithm~\ref{alg:minimum_cores}, if the total running time in completing all $ \mathcal{X} $ queries is at most $ \mathcal{T} $, then we need at least $ k $ cores to process all the queries. This implies that
	\begin{equation}\label{eqn:lower_bound_c}
		k \ge (\mathcal{X}-s) \cdot \left( \frac{t_{max}}{\mathcal{T}-t_{max}} \right).
	\end{equation}
	Note that $ s $ is the number of sample queries used in the preprocessing step. We replace $ s $ by $ k $ in Equation~\ref{eqn:lower_bound_c}, so that the number of queries during each parallel computation (including the preprocessing step) is as balanced as possible. By rearranging and solving Equation~\ref{eqn:lower_bound_c}, we have
	\begin{align*}
		k \cdot \left( 1 + \frac{t_{max}}{\mathcal{T}-t_{max}} \right) & \ge \mathcal{X} \cdot \left( \frac{t_{max}}{\mathcal{T}-t_{max}} \right)\\
		k & \ge \mathcal{X} \cdot \left( \frac{t_{max}}{\mathcal{T}-t_{max}} \right)\left( \frac{\mathcal{T}-t_{max}}{\mathcal{T}} \right)\\
		k & \ge \frac{\mathcal{X} \cdot t_{max}}{\mathcal{T}}. \numberthis \label{eqn:lower_cound_c_final}
	\end{align*}
\end{proof}

Given the impracticality of having unlimited resources in real-world environments, and considering that the required number of cores depends on both the processing time of each query and a specified duration (by Equation~\ref{eqn:lower_cound_c_final}), we adapt the principle utilised in Algorithm~\ref{alg:minimum_cores} and extend the algorithm by restricting the number of available cores.
%The main intention is to improve Algorithm~\ref{alg:minimum_cores} by narrowing the gap in between the number of cores required during the preprocessing step and experimental setting, so that it can be fitted into the real-world environment.

We use Lemma~\ref{lem:bound_c} in estimating the minimum number of cores required. By Equation~\ref{eqn:lower_cound_c_final}, it is evident that $ \mathcal{T} $ plays a crucial role in determining $ k $, to compute $ \mathcal{X} $ queries where the sample queries computed in parallel have the maximum running time $ t_{max} $. This suggests that a specified duration dictates the minimum number of cores necessary for computation. It is worth noting that with a fixed number of available cores, there exists a maximum capacity in processing a given task under a time constraint.

In the real-world environment, we could first estimate the minimum number $ k $ of cores based on a given duration. If $ k $ is higher than the available cores, then the process will be terminated. Otherwise, we prolong the duration to ensure that a feasible solution can always be obtained (since the number of cores is restricted).
%We usually set a higher $ T $ to cope with the worst scenario where the running times for all queries during parallel computations are $ t_{s} $, i.e., their running times are maximum.

By modifying Algorithm~\ref{alg:minimum_cores}, we propose \textsc{D\&A\_Real} (see Algorithm~\ref{alg:minimum_cores_practical}) that can be adapted into the real-world environment. Apart from the number $ \mathcal{X} $ of queries and a given duration $ \mathcal{T} $, we input the number of available cores $ C_{max} $. %in the algorithm. % \delta = 0.1*AC
\begin{algorithm}[t]
	\caption{\textsc{D\&A\_Real} ($ \mathcal{X}, \mathcal{T}, C_{max} $)}
	\begin{algorithmic}[1]
		%\STATE \textbf{Input}: $ \mathcal{X} $ independent queries, time $ \mathcal{T} $ and available cores $ C_{max} $
		%\STATE \textbf{Output}: The minimum number of cores required to process all $ \mathcal{X} $ queries within time $ \mathcal{T} $
		
		\STATE Preprocess $ s $ sample queries using $ c \ll s $ cores.
		\STATE For $ i=\{1,\ldots,s\} $, let $ t_{i} $ be the time to process the $ i^{th} $ query. Set $ t_{max} = \max\{t_{i}\} $, $ t_{pre} = \sum t_{i} $ and $ t_{avg} = \frac{c \cdot t_{pre}}{s} $.
		\STATE Using Equation~\ref{eqn:lower_cound_c_final}, compute the lower bound $ C $ of the number of cores required to process $ \mathcal{X} $ queries within time $ \mathcal{T} $.
		\STATE \textbf{if} {$ C_{max} < \lceil C \rceil $}
		\STATE \hspace{0.5cm} \textbf{raise} Error. %``Estimated number of cores to complete $ \mathcal{X} $ queries within time $ \mathcal{T} $ in the worst-case scenario is $ \lceil C \rceil $."
		\STATE \textbf{else}
		\STATE \hspace{0.5cm} Let $ \ell = \lfloor \frac{d \cdot \mathcal{T}-t_{pre}}{t_{avg}} \rfloor $ be the slots allocated to the remaining duration, where $ d \le 1 $ is a scaling factor.		
		\STATE \hspace{0.5cm} Assign $ k = \lceil \frac{\mathcal{X}-s}{\ell} \rceil $ queries to each of the $ \ell $ slots. For each slot, process all $ k $ queries in parallel using at most $ k $ cores.		
		\STATE \hspace{0.5cm} For $ j \in \{1,\ldots,k\} $, let $ T_{j} $ be the total processing time for the $ j^{th} $ core to complete all the allocated queries in each slot.		
		\STATE \hspace{0.5cm} Set $ T_{max} = \max\{T_{j} \vert j=1,\ldots,k\} $.
		\STATE \hspace{0.5cm} \textbf{if} {$ t_{pre} + T_{max} \le T $}
		\STATE \hspace{1.0cm} \textbf{return} $ k $. %``At least $ k $ cores are required to complete $ \mathcal{X} $ queries within time $ \mathcal{T} $"..
		\STATE \hspace{0.5cm} \textbf{else}
		\STATE \hspace{1.0cm} \textbf{raise} Error. %``Processing time exceeds $ \mathcal{T} $ due to a big time fluctuation".
	\end{algorithmic}
	\label{alg:minimum_cores_practical}
\end{algorithm}

To optimise resources in the real-world environment where the number of cores is restricted, the proposed concept closely resembles Algorithm~\ref{alg:minimum_cores}, albeit with a few modifications. When preprocessing $ s $ sample queries, instead of using $ s $ cores, we employ $ c \ll s $ cores (Line 1), since employing a relatively large $ s $ may improve efficiency but conflicts with our objective of minimising core usage. Hence, we set $ c=1 $ in our experiments. Subsequently, we use Equation~\ref{eqn:lower_cound_c_final} to determine the lower bound $ C $ of the number of cores, ensuring that $ \lceil C \rceil $ does not exceed $ C_{max} $ in the worst-case scenario (Lines 3--5). We then proceed with the divide and allocate operations outlined in Algorithm~\ref{alg:minimum_cores} to determine the minimum number of cores required, with appropriate modifications (Lines 6--14).

\subsection{Theoretical Bound}\label{sec:theoretical_cound}

We now determine the lower bound of $ \mathcal{C} $ in Problem~\ref{prob:minimising_cores} by using Hoeffding's inequality~\cite{Hoeffding_JASA1963}, which will be used as the baseline for comparison purposes.

\begin{lemma}\label{lem:hoeffdings_inequality}
	In Problem~\ref{prob:minimising_cores}, we have
	\begin{equation}\label{eqn:hoeffdings_inequality}
		\mathcal{C} \ge \frac{\mathcal{X}}{\mathcal{T}} \cdot \left( \bar{t_{k}} + \sqrt{\frac{\hat{t^{2}} \ln (2/p_{f})}{2k}} \right)
	\end{equation}
	where $ \bar{t_{k}} $ denotes the average processing time $ t $ of $ k $ sample queries, $ \hat{t} $ denotes the upper bound of $ t $, and $ p_{f} $ is a failure probability.
\end{lemma}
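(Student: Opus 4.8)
The plan is to follow the same template as the proof of Lemma~\ref{lem:bound_c}, replacing the crude worst-case per-query time $t_{max}$ used there with a high-confidence upper bound on the \emph{average} per-query time supplied by Hoeffding's inequality. First I would model the running times $t_{1},\ldots,t_{\mathcal{X}}$ of the $\mathcal{X}$ queries as independent random variables, each almost surely lying in an interval of width at most $\hat{t}$ and sharing a common mean $\bar{t}$, and write $\bar{t_{k}}=\frac{1}{k}\sum_{i=1}^{k}t_{i}$ for the empirical mean built from $k$ preprocessed sample queries. The target is the event, of probability at least $1-p_{f}$, on which $\bar{t}\le\bar{t_{k}}+\varepsilon$ for the tolerance $\varepsilon$ computed next.

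Next I would invoke Hoeffding's inequality in its two-sided form for the sample mean,
\[
\Pr\!\left(\left\lvert \bar{t_{k}}-\bar{t}\right\rvert \ge \varepsilon\right)\ \le\ 2\exp\!\left(-\frac{2k\,\varepsilon^{2}}{\hat{t}^{2}}\right),
\]
set the right-hand side equal to the prescribed failure probability $p_{f}$, and solve for the tolerance, obtaining $\varepsilon=\sqrt{\hat{t}^{2}\ln(2/p_{f})/(2k)}$. Hence, outside an event of probability at most $p_{f}$, the true mean obeys $\bar{t}\le \bar{t_{k}}+\sqrt{\hat{t}^{2}\ln(2/p_{f})/(2k)}$. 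Using the two-sided rather than the one-sided bound from the outset is what produces the factor $2$ inside the logarithm, matching the statement.

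Then I would convert this into a bound on the number of cores exactly as in Lemma~\ref{lem:bound_c}. The aggregate processing time of all $\mathcal{X}$ queries is $\sum_{i=1}^{\mathcal{X}}t_{i}$, whose expectation $\mathcal{X}\bar{t}$ is at most $\mathcal{X}(\bar{t_{k}}+\varepsilon)$ on the good event (and, for $\mathcal{X}$ large, the total concentrates about this mean); since on $\mathcal{C}$ cores the makespan is at least this aggregate divided by $\mathcal{C}$ and must not exceed $\mathcal{T}$, we need $\mathcal{C}\ge\mathcal{X}(\bar{t_{k}}+\varepsilon)/\mathcal{T}$. Equivalently, running the divide-and-allocate steps with the per-query budget $\bar{t_{k}}+\varepsilon$ in place of $t_{max}$ and rebalancing the preprocessing batch by substituting $s\leftarrow k$ as in the proof of Lemma~\ref{lem:bound_c}, the relation $k\bigl(1+\tfrac{\bar{t_{k}}+\varepsilon}{\mathcal{T}-(\bar{t_{k}}+\varepsilon)}\bigr)\ge \mathcal{X}\cdot\tfrac{\bar{t_{k}}+\varepsilon}{\mathcal{T}-(\bar{t_{k}}+\varepsilon)}$ collapses to $\mathcal{C}=k\ge\mathcal{X}(\bar{t_{k}}+\varepsilon)/\mathcal{T}$; substituting the value of $\varepsilon$ gives the stated inequality, valid with confidence $1-p_{f}$.

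The main obstacle I expect is the probabilistic book-keeping rather than any algebra: Hoeffding demands genuine independence and an almost-sure range bounded by $\hat{t}$, and — more delicately — it controls only the \emph{mean} per-query time, not individual running times or the per-core loads. Turning ``$\bar{t}\le\bar{t_{k}}+\varepsilon$ with high probability'' into ``$\mathcal{C}$ cores truly finish within $\mathcal{T}$'' therefore tacitly assumes the remaining queries are spread so that each core's load concentrates near its expectation; for a baseline this looseness is acceptable, and it is precisely the time-fluctuation slack that the scaling factor $d$ in \textsc{D\&A\_Real} is introduced to absorb. The floor/ceiling rounding in Algorithm~\ref{alg:minimum_cores} and the $s\leftarrow k$ substitution are already handled in Lemma~\ref{lem:bound_c}, so I would quote that computation rather than repeat it.
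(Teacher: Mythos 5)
Your proposal is correct and follows essentially the same route as the paper: a two-sided Hoeffding bound on the sample mean $\bar{t_{k}}$ with failure probability $p_{f}$ (yielding the tolerance $\varepsilon=\sqrt{\hat{t}^{2}\ln(2/p_{f})/(2k)}$ and the factor $2$ in the logarithm), followed by the conversion $\mathcal{C}\ge\mathcal{X}(\bar{t_{k}}+\varepsilon)/\mathcal{T}$; the paper merely performs the algebra in the reverse order, fixing $\lambda=\frac{\mathcal{T}\mathcal{C}}{\mathcal{X}}-\bar{t_{k}}$ and solving $2\exp(-2k\lambda^{2}/\hat{t}^{2})\le p_{f}$ for $\mathcal{C}$. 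The probabilistic caveats you flag (independence, the bound controlling only the mean rather than the realised makespan) are tacitly present in the paper's argument as well, so no gap is introduced by your version.
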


\begin{proof}
	Let $ p_{f} $ be as stated. We now estimate the lower bound of the number $ \mathcal{C} $ of cores required to process all $ \mathcal{X} $ queries within a given time $ \mathcal{T} $. We assume that $ \mathcal{X} $ is sufficiently large and all query times $ t $ have a same distribution. Let $ E[t] $ be the expected value of $ t $, we aim to
	\begin{equation*}
		minimise\ \mathcal{C}
	\end{equation*}
	subject to
	\begin{equation}\label{eqn:constraint}
		P\left( \frac{\mathcal{X} \cdot E[t]}{\mathcal{C}} \le \mathcal{T} \right) \ge 1-p_{f}.
	\end{equation}
	
	To obtain the estimation on $ \mathcal{C} $, we first process $ k $ PPR queries. Let $ t_{1},t_{2},\ldots,t_{k} $ be the processing times for the $ k $ queries, and $ \bar{t_{k}} = \frac{1}{k} \sum_{i=0}^{k} t_{i} $. Suppose the lower and upper bounds of the query time are $ 0 $ and $ \hat{t} $, respectively. We apply Hoeffding's inequality to estimate the lower bound of $ \mathcal{C} $. For all $ \lambda>0 $, we first have
	\begin{equation*}
		P\left( \abs{\bar{t_{k}} - E[t]} \ge \lambda \right) \le 2 \exp \left( -\frac{2k^{2}\lambda^{2}}{k\hat{t}^{2}} \right).
	\end{equation*}
	The constraint in Equation~\ref{eqn:constraint} is satisfied when $ \lambda \le \frac{\mathcal{T}\mathcal{C}}{\mathcal{X}} - \bar{t_{k}} $. Suppose $ \lambda = \frac{\mathcal{T}\mathcal{C}}{\mathcal{X}} - \bar{t_{k}} $. To satisfy Equation~\ref{eqn:constraint}, we have
	\begin{align*}
		2 \exp \left( -\frac{2k(\frac{\mathcal{T}\mathcal{C}}{\mathcal{X}} - \bar{t_{k}})^{2}}{\hat{t}^{2}} \right) & \le p_{f}\\
		\frac{2k(\frac{\mathcal{T}\mathcal{C}}{\mathcal{X}} - \bar{t_{k}})^{2}}{\hat{t}^{2}} & \ge \ln \left( \frac{2}{p_{f}} \right)\\
		\frac{\mathcal{T}\mathcal{C}}{\mathcal{X}} - \bar{t_{k}} & \ge \sqrt{\frac{\hat{t}^{2} \ln(2/p_{f})}{2k}}\\
		\mathcal{C} & \ge \frac{\mathcal{X}}{\mathcal{T}} \cdot \left( \bar{t_{k}} + \sqrt{\frac{\hat{t^{2}} \ln (2/p_{f})}{2k}} \right).
	\end{align*}
	Hence, by satisfying the constraint in Equation~\ref{eqn:constraint}, we obtain the lower bound of $ \mathcal{C} $ for Problem~\ref{prob:minimising_cores}.
\end{proof}

\section{Experiments}

In this section, we evaluate the effectiveness of \textsc{D\&A\_Real} by computing PPR queries using FORA~\cite{Wang_KDD2017}. The experimental results are compared with the bound derived in Lemma~\ref{lem:bound_c}.

\subsection{Experimental settings}\label{sec:experimental_setting}

\paragraph{Setup} 

We conduct our experiments on a server with an Intel(R) Xeon(R) Gold 6326 CPU@2.90GHz processor, 256GB memory and running 64-bit Ubuntu 20.04.4 LTS. There are 64 available cores on the server.

\paragraph{Data and query sets} 

We focus on four benchmark datasets \emph{Web-Stanford}, \emph{DBLP}, \emph{Pokec} and \emph{LiveJournal} in conducting the experiments, as summarised in Table~\ref{table:datasets}. %The summary of these datasets is shown in Table~\ref{table:datasets}.
\begin{table}[H]
	\caption{Summary of datasets}
	\label{table:datasets}
	\centering
	\begin{tabular}{lrrc}
		\hline
		\multicolumn{1}{c}{Dataset} & \multicolumn{1}{c}{Order ($ n $)} & \multicolumn{1}{c}{Size ($ m $)} & \multicolumn{1}{c}{Type}\\
		\hline
		\emph{Web-Stanford} & 281,903 & 2,312,497 & Directed\\
		\emph{DBLP} & 613,586 & 3,980,318 & Undirected\\
		\emph{Pokec} & 1,632,803 & 30,622,564 & Directed\\
		\emph{LiveJournal} & 4,847,571 & 68,993,773 & Directed\\
		\hline	
	\end{tabular}
\end{table}

The number of sample queries during the preprocessing stage is mainly determined by the order and size of the respective graph, the processing time for each query and the number of cores assigned. For \emph{Web-Stanford}, we use the formula as discussed in Section~\ref{sec:sample_size} to determine the number $ s $ of sample queries, where $ s $ is set conservatively as in Equation~\ref{eqn:sample_size} based on the number of queries used for experimental purposes. For graphs with larger orders and sizes (\emph{DBLP}, \emph{Pokec}, and \emph{LiveJournal}), we however observe that the same strategy is less suitable to determine the number of sample queries, due to a longer processing time per query. Hence, the number of sample queries is fixed at 5\% of the smallest number of queries for large graphs, given that a relatively small $ c $ will be used during the preprocessing step in the real-world environment.
%and the number of samples derived based on Equation~\ref{eqn:sample_size} may not be suitable for these datasets.
It is noteworthy that the average running time per query during the preprocessing stage remains consistent across these datasets throughout the experiments, under both scenarios.

\paragraph{Parameters} To be more conservative, we set $ c = 1 $ during the preprocessing stage so that the number of required cores in processing a given number of queries within a given duration can be minimised under different scenarios. The duration $ \mathcal{T} $ is set based on the processing time per query derived in\cite{Wang_KDD2017}. Recall that random functions used in FORA may cause fluctuations of processing time, hence we also include a scaling factor $ d \le 1 $ in our framework in dealing with this concern. Instead of taking the average processing time by repeating the experiments, we notice that $ d $ can be used in coping with the fluctuation issue by setting it appropriately.

Since the time fluctuation is directly proportional to the order and size of a graph, the value of $ d $ should behave the other way round. Thus, we set $ d = 1.00 $ for \emph{Web-Stanford}, $ d = 0.85 $ for both \emph{DBLP} and \emph{Pokec}, and $ d = 0.80 $ for \emph{LiveJournal}. We follow~\cite{Wang_KDD2017} for the values of all other parameters in FORA.
%As evidenced by our results, we achieve the desire objective even though these values may not be optimal.

\subsection{Results and Analyses}\label{sec:results}

We evaluate the effectiveness of the proposed framework \textsc{D\&A\_Real} based on FORA using a multi-core machine. The outcome of our experiments is shown in Fig.~\ref{fig:result}, in which the number $ \mathcal{X} $ of queries and given durations $ \mathcal{T} $ are set differently for various datasets.
\begin{figure*}[!t]
	\centering
	\subfloat[\emph{Web-Stanford} when $ \mathcal{T}=50 $]{
		\includegraphics[width=1.735in]{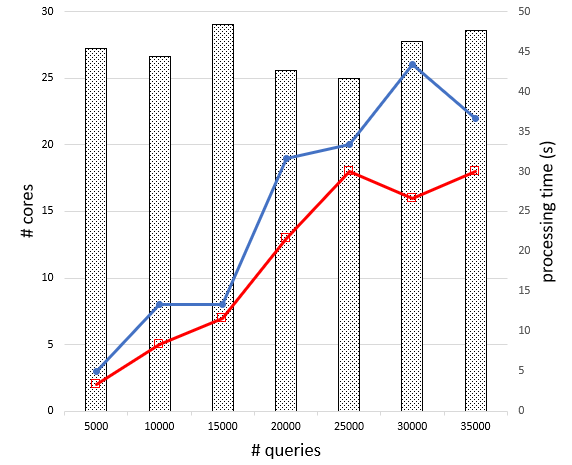}%
	}
	\hfil
	\subfloat[\emph{DBLP}  when $ \mathcal{T}=500 $]{
		\includegraphics[width=1.735in]{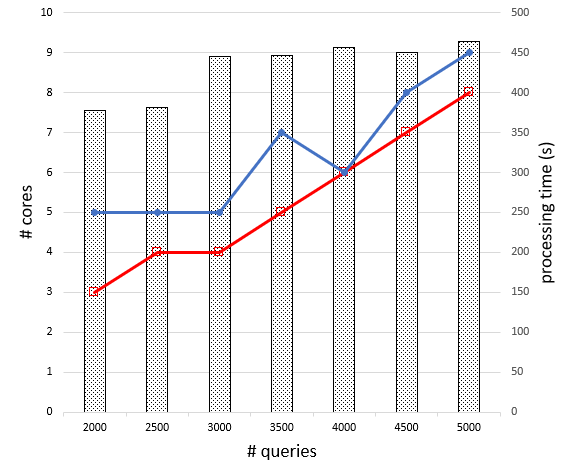}%
	}
	\hfil
	\subfloat[\emph{Pokec}  when $ \mathcal{T}=600 $]{
		\includegraphics[width=1.735in]{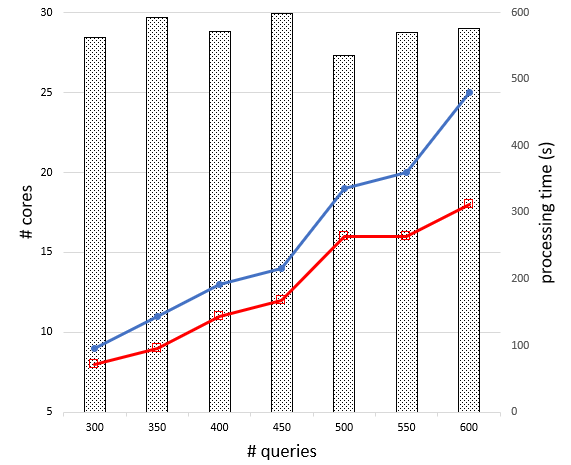}%
	}
	\hfil
	\subfloat[\emph{LiveJournal}  when $ \mathcal{T}=800 $]{
		\includegraphics[width=1.735in]{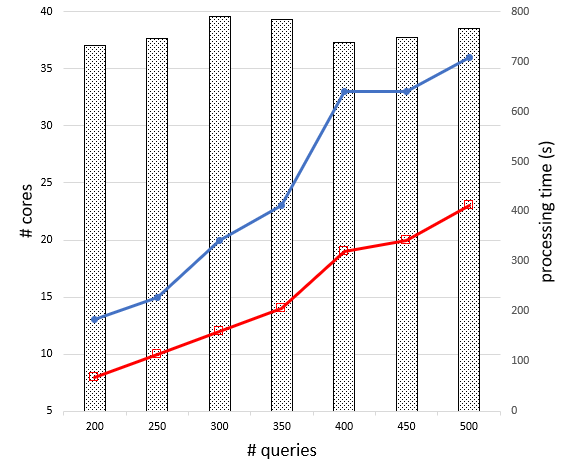}%
	}
	\caption{Results for the minimum number of required cores based on four benchmark datasets, by varying the number $ \mathcal{X} $ of queries. The bar chart indicates the processing time whereas the line graphs indicate the number of required cores for \textsc{D\&A\_Real} (in red) and the theoretical bound in Lemma~\ref{lem:hoeffdings_inequality} (in blue)}
	%: (a) \emph{Web-Stanford} when $ \mathcal{T}=50 $, (b) \emph{DBLP}  when $ \mathcal{T}=500 $, (c) \emph{Pokec}  when $ \mathcal{T}=600 $ and (d) \emph{LiveJournal}  when $ \mathcal{T}=800 $
	\label{fig:result}
\end{figure*}

Based on Fig.~\ref{fig:result}, we can see that the number of cores required by \textsc{D\&A\_Real} is always lesser comparing to the lower bound that we derived theoretically in Lemma~\ref{lem:hoeffdings_inequality} (since the bound is also affected by the average and upper bound of processing times), except for one circumstance where they both are equal under the \emph{DBLP} dataset (see Fig.~\ref{fig:result}(b)). This phenomenon primarily arises from the random functions inherent in FORA, leading to a lower upper bound $ \hat{t} $ for the set of running times when handling sample queries compared to others. Consequently, this results in a smaller boundary in Equation~\ref{eqn:hoeffdings_inequality}. Since the processing time of FORA is affected by the random functions used for generating random walks, we believe that if the value of $ \hat{t} $ is analogous in all circumstances, similar results will always be obtained where the number of cores required by \textsc{D\&A\_Real} is lesser than the theoretical bound. We also observe that a smaller or the same number of cores is sometimes required even when the number of queries is higher. This is again possible if $ \hat{t}_{1} > \hat{t}_{2} $ for $ \mathcal{X}_{1} < \mathcal{X}_{2} $, where $ \hat{t}_{i} $ is associated to $ \mathcal{X}_{i} $.

The experimental results show that \textsc{D\&A\_Real} is effective in minimising the number of cores required under parallel computing environments when it is tested using FORA. It reduces the number of cores by up to $ 62.50\%, 66.67\%, 38.89\% $ and $ 73.68\% $ for four benchmark datasets \emph{Web-Stanford}, \emph{DBLP}, \emph{Pokec} and \emph{LiveJournal}, respectively. We expect a more consistent result if \textsc{D\&A\_Real} is tested by frameworks that involve lesser random functions, in which the number of slots can be determined in a more accurate manner.

We then make a comparison by manipulating the scaling factor $ d $ that is used to address the time fluctuation issue in our experiments. For \emph{Web-Stanford}, we observe that if $ d $ is reduced from $ 1.00 $ to $ 0.85 $ with all other variables remain, all queries can be completed in a shorter duration with a higher number of cores under most instances (see Fig.~\ref{fig:result_comparison}).
\begin{figure}[t]
	\centering
	\subfloat[$ d = 1.00 $]{
		\includegraphics[width=1.695in]{web_T50_02}%
	}
	\hfil
	\subfloat[$ d = 0.85 $]{
		\includegraphics[width=1.695in]{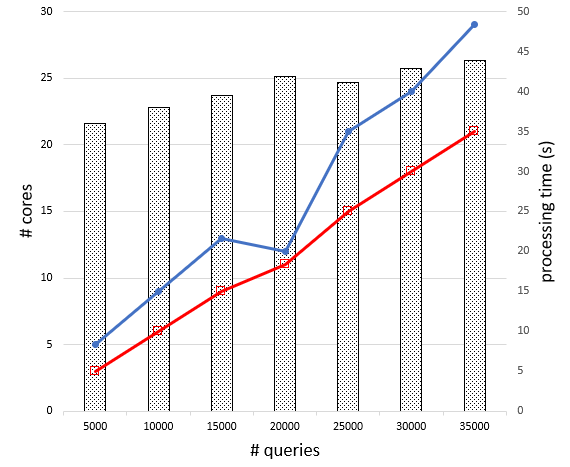}%
	}
	\caption{A comparison using different scaling factor $ d $ for \emph{Web-Stanford}}
	%: (a) $ d = 1.00 $ and (b) $ d = 0.85 $
	\label{fig:result_comparison}
\end{figure}
This is in accordance with our intention in which a lower $ d $ will lead to a smaller number of slots and hence a higher number of cores. For all other datasets, we found that if $ d = 1.00 $, then all queries may not be processed within the given duration due to the time fluctuations caused by random functions, which further indicates the benefit of the scaling factor. We note that graphs with larger orders and sizes require a smaller $ d $ to process all queries within a given time frame.  It is hence natural to ask if there is a method to determine $ d $ theoretically. We assert that the values of $ d $ are optimal for both \emph{Pokec} and \emph{LiveJournal} datasets given that their processing times for certain queries are close to the maximum (as shown in Fig.~\ref{fig:result}(c) and Fig.~\ref{fig:result}(d)) as per our experimental findings.

\section{Conclusion and Future Direction}\label{sec:conclusion}

We design a framework \textsc{D\&A\_Real} to establish the necessary number of cores for processing a certain number of queries within a fixed duration in parallel computing environments. Our approach involves dividing and allocating queries into specific slots based on the processing time identified during the preprocessing stage.  To mitigate the variability introduced by randomness in the framework, instead of conducting repeated experiments, we introduce a scaling factor. We measure the effectiveness of the framework using FORA, and compare the experimental results to a theoretically derived baseline. Our findings demonstrate that \textsc{D\&A\_Real} outperforms the baseline, as it requires significantly fewer cores even for graphs with millions of vertices and edges.

For future work, we propose expanding our framework by integrating it into diverse algorithms with different functionalities (including online settings) within parallel environments, so that its effectiveness can be validated further, aligning with real-world requirements. Additionally, learning-based approaches~\cite{Yow_arxiv2023v3,Yow_PVLDB2023} may also be employed to enhance the overall performance of this framework.

\section*{Acknowledgments}

The authors would like to thank all the reviewers for their constructive comments and guidance, to express gratitude to Dingheng Mo for sharing his insight in deriving the theoretical bound, to Siqiang Luo for suggesting this direction, and to Ningyi Liao for his helpful comments. The first author is most grateful to Singapore National Academy of Science for appointing him as an SASEA Fellow through a grant (NRF-MP-2022-0001) supported by NRF Singapore, and Universiti Putra Malaysia for granting him Leave of Absence in completing this work in Nanyang Technological University.
%The authors would like to thank Dingheng Mo for sharing his idea in deriving the theoretical bound in this paper, and Siqiang Luo and Ningyi Liao for their helpful comments. 

%{\appendices
%\section*{Proof of the First Zonklar Equation}
%Appendix one text goes here.
% You can choose not to have a title for an appendix if you want by leaving the argument blank
%\section*{Proof of the Second Zonklar Equation}
%Appendix two text goes here.}

\bibliographystyle{IEEEtran}
\bibliography{references}

\end{document}